\newtheorem{theorem}{Theorem}
\newtheorem{corol}{Corollary}
\newtheorem{remark}{Remark}
\title{Diffusive Molecular Communication in a Biological Spherical Environment with Partially Absorbing Boundary}
\author{Hamidreza~Arjmandi, Mohammad~Zoofaghari, and Adam~Noel}
\begin{document}
\maketitle
%
%


\begin{abstract}
Diffusive molecular communication (DMC) is envisioned as a promising approach to help realize healthcare applications within bounded biological environments. In this paper, a DMC system within a biological spherical environment (BSE) is considered, inspired by bounded biological sphere-like structures throughout the body. As a biological environment, it is assumed that the inner surface of the sphere's boundary is fully covered by biological receptors that may irreversibly react with hitting molecules. Moreover, information molecules diffusing in the sphere may undergo a degradation reaction and be transformed to another molecule type. Concentration Green's function (CGF) of diffusion inside this environment is analytically obtained in terms of a convergent infinite series. By employing the obtained CGF, the information channel between transmitter and transparent receiver of DMC in this environment is characterized. Interestingly, it is revealed that the information channel is reciprocal, i.e., interchanging the position of receiver and transmitter does not change the information channel. Results indicate that the conventional simplifying assumption that the environment is unbounded may lead to an inaccurate characterization in such biological environments.

\end{abstract}

\begin{IEEEkeywords}
Diffusive molecular communication (DMC), bounded biological environment, Green's function, Error probability.
\end{IEEEkeywords}

\section{Introduction}

Diffusive molecular communication (DMC) is a promising approach for realizing nano-scale communications \cite{Nakano13}. In DMC, molecules are used to carry information from transmitter to receiver nanomachine via a diffusion mechanism. Information is encoded in the concentration, type, and/or release time of molecules. 
Due to the potential of bio-compatibility \cite{Akyldiz2011}-\cite{Farsad16}, DMC is envisioned to be widely applied in healthcare applications.
In biological environments, the DMC system may be influenced by various environmental properties, e.g., bounded environment and/or degradation reactions. The effects of these characteristics need to be accounted for in the analysis of DMC system performance.

The performance of DMC systems with different geometries and environmental boundary conditions have been investigated in the literature. DMC system performance in an ideal unbounded environment has been extensively studied in communication engineering \cite{Kuran10,Pier14,Bazar14,Noel14,Aijaz15,Kilinc13,Wang15,Mahfuz14,Pier11}. This assumption leads to simple tractable analysis of diffusion which may provide insightful ideas about the effect of different parameters. 
However, the unbounded environment is generally not a realistic assumption for an in-vivo environment. Thereby, different bounded environment models have also been proposed for DMC systems.

Inspired by the blood vessel structures and microfluidic channels, bounded cylindrical environments have been considered for DMC systems.
In \cite{Farsad12}, diffusion communication channels inside a cylindrical environment with elastic (i.e., reflective) walls was characterized via particle-based simulation. In \cite{Kuran13}, a cylindrical DMC model with absorbing walls and no flow was considered. The hitting times and probabilities were obtained from simulation results.
The response to a pulse of carriers, released by a mobile transmitter, was measured by receivers positioned over the vessel wall in \cite{Feli13}, also based on simulation results. In \cite{Turan18}, a cylindrical DMC environment was considered where the receiver partially covers the cross-section of a reflective cylinder. The distribution of hitting locations is again obtained from simulation results.
The authors in \cite{Wayan17}  considered the diffusion in a cylinder with reflective walls and non-uniform fluid flow. Assuming a transmitter point source, the channel impulse responses for two simplifying flow regimes referred to as dispersion and flow-dominant, were derived.
In \cite{Dinc18}, the authors obtained the channel impulse response for a 3-D microfluidic channel environment in the presence of flow where the boundaries are reflective. Also, in \cite{Zoofaghari18} we obtain the concentration Green's function in a biological cylindrical environment where the boundary is covered by receptor proteins and information molecules are subject to both flow and chemical degradation.

Another useful and relevant geometry is the bounded spherical environment, as considered in  \cite{Alzubi18,Dinc182} which is inspired by some sphere-like entities in the body, e.g., stomach, lung, kidney, cells, nucleus, etc.
In  \cite{Alzubi18,Dinc182}, the outer boundaries are idealized as fully absorbing and fully reflective boundaries, respectively, and the receiver is assumed to be located at the center of the bounded sphere.
In \cite{Alzubi18}, the authors consider a DMC system in a bounded sphere whose boundary is fully absorbing and a spherical receiver is assumed to be located at the center of the sphere and its surface is covered by ligand receptors.
Therefore, the analysis proposed by these works cannot account for the diffusion asymmetry in the elevation and azimuth coordinates, which may be unavoidable depending on the locations of the transmitter and receiver.
Furthermore, the simplifying boundary conditions
may not hold in-vivo environments, where boundaries covered by biological receptors may lead to partial absorption of molecules. For instance, the inner surface of many internal organs such as the stomach and the lung are coated with epithelial cells. Also, the inner layer of blood vessels is surrounded by endothelium cells \cite{Cliff}. The surfaces of these cells contain various types of receptors and act as an interface between the underlying layer and the outside environment.




In this paper, we consider a point-to point DMC system in a bounded biological spherical environment (BSE). The inner layer of the outer environment boundary is assumed to be covered with biological receptors, leading to a partially absorbing boundary. An information molecule can act as a ligand if it hits the boundary and reacts with a receptor molecule to produce a ligand-receptor complex. A simple irreversible ligand-receptor reaction is considered to make our analysis analytically tractable. Moreover, a degradation reaction is assumed within the environment such that the diffusive information molecules may be transformed into another type. 

Assuming a point source transmitter at an  arbitrary location in the sphere, we analytically obtain the Green's function (CGF) of diffusion inside this environment as a convergent infinite series that accounts for the asymmetry in all radial, elevation, and azimuthal directions.
A point-to-point DMC system is considered within a BSE where the point source transmitter and a transparent receiver are at arbitrary locations. 
By employing the obtained CGF, the probability density function (PDF) for observation times of a molecule at the receiver is characterized. Correspondingly, the average received signal at the observing receiver is derived. Interestingly, the obtained expression for the CGF reveals the channel reciprocity, i.e., interchanging the positions of the receiver and transmitter does not change the CGF and correspondingly the average received signal. Furthermore, the stochasticity of the received signal is analyzed and accordingly the information channel between the transmitter and receiver is characterized.
The proposed analysis is confirmed by particle-based simulation (PBS) results. Also, the effect of system parameters on the observation time PDF are examined. Our results indicate that the conventional ideal assumption of an unbounded environment  may lead to an inaccurate characterization of BSE. 



The paper is organized as follows. The system model is presented in Section II. The CGF of diffusion in the BSE is obtained in Section III. In Section IV, the information channel between the transmitter and receiver is characterized, and the error probability of DMC with a simple on-off keying modulation over this channel is presented. The simulation and numerical results are presented in Section V. Finally, the paper is concluded in Section VI.




\section{System Model} \label{system model} \label{section2}
\subsection{Biological Spherical Environment}
 The spherical coordinate system is used to describe the environment geometry where $(r,\theta,\varphi)$ denote radial, elevation, and azimuth coordinates, respectively. A sphere with radius $r_s$ is considered and the center of the sphere is chosen as the origin of the coordinate system.
The following degradation reaction is considered in the environment in which the (information) molecules $A$ diffusing in the environment may be transformed to another molecule type:
 \begin{equation}\label{deg1}
	\mathrm{A} \overset{k_{\mathrm d}}{\to} \mathrm{\hat{A}},
\end{equation}
where $k_{\mathrm d}$ is the degradation reaction constant in $\mathrm{s}^{-1}$. 
We assume that the sphere boundary is fully covered by infinitely many biological receptors. An A molecule (ligand) hitting the boundary may bind to a receptor (R) and produce a ligand-receptor complex (AR). A simple irreversible reaction for the receptors on the boundary is considered as follows:
 \begin{equation} \label{deg2}
	\mathrm{A}+\mathrm{R} \overset{k_{\mathrm f}}{\to} \mathrm{AR},
\end{equation}
where $k_{\mathrm f}$ is the forward reaction constant in $\si{m.s^{-1}}$. Thus, the boundary is partially absorbing where a hitting molecule is absorbed with a probability dependent on $k_{\mathrm f}$. The boundary has the special cases of purely reflective and perfectly absorbing for $k_{\mathrm f}=0$ and $k_{\mathrm f}=\infty$, respectively. In this paper, the effect of receptor occupancy is neglected and the formations of the individual ligand-receptor complexes are assumed to be independent of each other. As a result, multiple information molecules may react within the same vicinity of each other on the sphere boundary and at the same time.

\subsection{DMC System in BSE}
A point-to-point DMC system is considered within the bounded biological spherical environment. A point source transmitter located at an arbitrary point $\bar{r}_{\rm tx}=(r_{\rm tx},\theta_{\rm tx},\varphi_{\rm tx})$ in the sphere is assumed. The transmitter uses information molecules of type A. The diffusion coefficient of the medium for information molecule $A$ is denoted by $D$ \si{m^2.s^{-1}}.
Also, a transparent receiver is considered that does not affect the Brownian motion of molecules. The receiver is a sphere with radius $R_{\rm rx}$ whose center is located at $\bar{r}_{\rm rx}=(r_{\rm rx},\theta_{\rm rx},\varphi_{\rm rx})$.
%
A schematic illustration of the system model is represented in Fig. \ref{Fig0}
\begin{figure}
\center
\includegraphics[width=15 cm,height=9 cm]{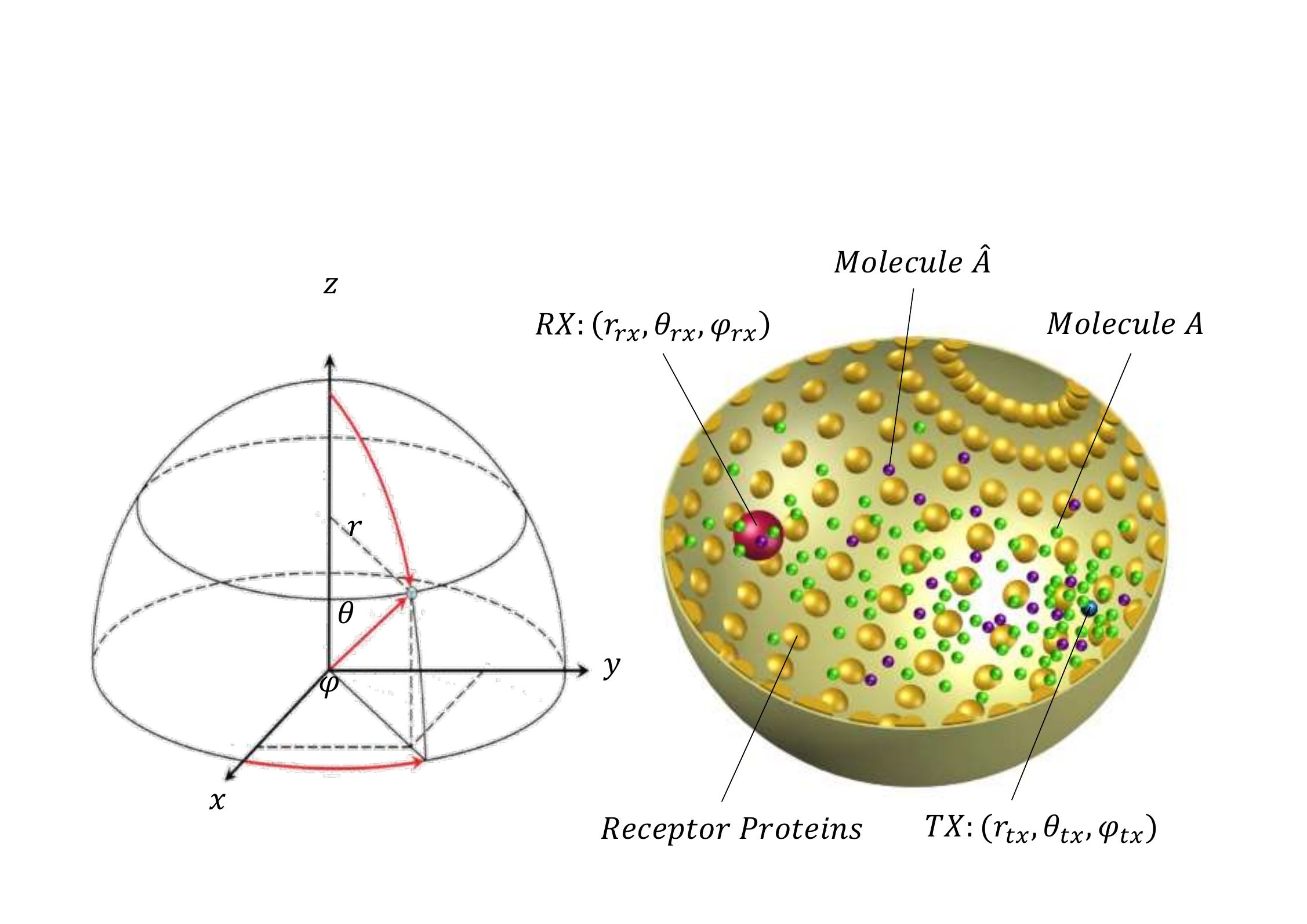}
	\setlength{\abovecaptionskip}{-0.5 cm}
 \caption{DMC system in biological spherical environment (Only one hemisphere has been illustrated).}
\label{Fig0}
\end{figure}

Time is divided into time slot durations of $T$ seconds (s). The receiver and transmitter are assumed to be perfectly synchronized \cite{syncArj}. In each time slot, the transmitter releases information molecules into the environment according to the intended symbol. The released molecules move randomly in the environment following Brownian motion. Their movements are assumed to be independent of each other. The diffusing molecules, which are exposed to the degradation reaction and binding with the receptors on the boundary, may be observed at the receiver at a sampling time. The receiver counts the number of molecules within its volume at the sampling time to decide the intended transmitted symbol.
To analyze the presented DMC system, we formulate the Green's function boundary value problem for diffusion in the described environment.
\subsection{Green's function Boundary Value Problem}
We assume that the point source transmitter, located at an arbitrary point $\bar{r}_{\rm tx}=(r_{\rm tx},\theta_{\rm tx},\varphi_{\rm tx})$ inside the sphere has an instantaneous molecule release rate of $\delta(t-t_0)$ molecule $(\rm mol)/ \rm s$, where $\delta(\cdot)$ is Dirac delta function. In the spherical coordinate system, this impulsive point source can be represented by the function $S(\bar r,t,{\bar{r}_{\rm tx}},t_0)=\frac{{\delta (r  - {r _{\rm tx}})\delta(\theta-\theta_{\rm tx})\delta(\varphi-\varphi_{\rm tx})\delta(t-t_0)}}{ r^2\sin\theta  }$ $\si{mol.s^{-1}.m^{-3}}$. Given the source $S(\bar r,t,{\bar{r}_{\rm tx}},t_0)$ and the degradation reaction \eqref{deg1}, the molecular diffusion is described by partial differential equation (PDE) \cite{Grindrod}
\begin{align}\label{fick}
D{\nabla ^2}C(\bar r,t|{{\bar r}_{\rm tx}},{t_0})
 - {k_{\mathrm d}}C(\bar r,t|{{\bar r}_{\rm tx}},{t_0}) + S(\bar r,t,{\bar{r}_{\rm tx}},t_0) = \frac{{\partial C(\bar r,t|{{\bar r}_{\rm tx}},{t_0})}}{{\partial t}}
\end{align}
where $C(\bar r,t|{{\bar r}_{\rm tx}},{t_0})$ denotes the molecule concentration at point $\bar r$ and time $t$.
In the spherical coordinate system, \eqref{fick} is re-written as
\begin{align}\label{Eq}
  \frac{D}{{{r^2}}}\frac{\partial }{{\partial r}}\left({r^2}\frac{{\partial C(\bar r,t|{{\bar r}_{\rm tx}},{t_0})}}{{\partial r}}\right) + \frac{D}{{{r^2}\sin \theta }}\frac{\partial }{{\partial \theta }}\left(\sin \theta \frac{{\partial C(\bar r,t|{{\bar r}_{\rm tx}},{t_0})}}{{\partial \theta }}\right) + \frac{D}{{{r^2}{{\sin }^2}\theta }}\frac{{{\partial ^2}C(\bar r,t|{{\bar r}_{\rm tx}},{t_0})}}{{\partial {\varphi ^2}}}\\
 - {k_{\mathrm d}}C(\bar r,t|{{\bar r}_{\rm tx}},{t_0})+\frac{{\delta (r  - {r _{\rm tx}})\delta(\theta-\theta_{\rm tx})\delta(\varphi-\varphi_{\rm tx})\delta(t-t_0)}}{ r^2\sin\theta  } = \frac{{\partial C(\bar r,t|{{\bar r}_{\rm tx}},{t_0})}}{{\partial t}}. \nonumber
\end{align}
The irreversible ligand-receptor reaction over the sphere boundary given in \eqref{deg2} is characterized by the third type (Robin) boundary condition of \cite{Crank} \footnote{Since the condition is over the inner boundary, i.e.,  $C(\bar r,t|{{\bar r}_{\rm tx}},{t_0})$ is the concentration for $r\leq r_s$, the negative sign on the right side is required.}
\begin{equation}\label{BD1}
D\frac{{\partial C(\bar r,t|{{\bar r}_{\rm tx}},{t_0})}}{{\partial r }}\mid_{\bar r=(r_s,\theta,\varphi)}=-k_{\mathrm f} C(r_s,\theta,\varphi,t|{{\bar r}_{\rm tx}},{t_0}).
\end{equation}
The concentration function $C(\bar r,t|{{\bar r}_{\rm tx}},{t_0})$ that satisfies \eqref{fick} subject to the boundary condition \eqref{BD1} is called the concentration Green's function (CGF) of diffusion. Using Green's function, the solution of diffusion for an arbitrary source can be obtained based on the superposition principle.



\section{Deriving CGF and Characterizing Received Signal}
In this section, we derive the Green's function for diffusion in a bounded BSE.
\subsection{CGF of Diffusion in Biological Spherical Environment}


The impulsive point source in \eqref{Eq} is equivalent to considering an initial condition of
\begin{equation} \label{sepin}
C(\bar r,t=t_0|{{\bar r}_{\rm tx}},{t_0})=\frac{{\delta (r  - {r _{\rm tx}})\delta (\theta-\theta_{\rm tx})\delta (\varphi  - {\varphi _{\rm tx}})}}{r^2 \sin\theta}.
\end{equation}
By considering this initial condition and removing the source term in \eqref{Eq}, we obtain a homogeneous PDE that can be solved by the well-known technique of separation of variables {\cite{HG}}. The solution with separated variables is considered as follows:
\begin{equation}\label{Eqs}
  C(r,t|{r_{\rm tx}},{t_0}) = R(r|r_{\rm tx})\Theta (\theta|\theta_{\rm tx} )\Phi (\varphi|\varphi_{\rm tx} )T(t|t_0).
\end{equation}
By substituting \eqref{Eqs} into PDE \eqref{Eq} and the boundary condition \eqref{BD1}, dividing both sides of equalities by $R(r|r_{\rm tx})\Theta (\theta|\theta_{\rm tx} )\Phi (\varphi|\varphi_{\rm tx} )T(t|t_0)$, and with some simple manipulation, we have
\begin{align}\label{sep}
 &{r^2}{\sin ^2}\theta \left(\frac{2}{r}\frac{{R'(r|{r_{\rm tx}})}}{{R(r|{r_{\rm tx}})}} + \frac{{R''(r|{r_{\rm tx}})}}{{R(r|{r_{\rm tx}})}}\right) + \sin \theta \cos \theta \frac{{\Theta '(\theta |{\theta _{\rm tx}})}}{{\Theta (\theta |{\theta _{\rm tx}})}} + \\
&{\sin ^2}\theta \frac{{\Theta ''(\theta |{\theta _{\rm tx}})}}{{\Theta (\theta |{\theta _{\rm tx}})}} - \frac{{T'(t|{t_0})}}{{DT(t|{t_0})}}{r^2}{\sin ^2}\theta  - \frac{{{k_{\mathrm d}}}}{D}{r^2}{\sin ^2}\theta  = - \frac{{\Phi ''(\varphi |{\varphi _{\rm tx}})}}{{\Phi (\varphi |{\varphi _{\rm tx}})}}\overset{(a)}{=}\alpha\nonumber
\end{align}
subject to the following boundary condition:
\begin{equation}\label{sep3B}
   DR'(r|r_{\rm tx})\mid_{r=r_s} = -{k_{\mathrm f}}R(r_s|r_{\rm tx}),
 \end{equation}
where equality with constant $\alpha$ in (a) holds, since we have two separated functions on the left and right hand sides of the first equality.
From \eqref{sep}, we have the following ordinary differential equation:
\begin{equation}\label{sep1}
\Phi''(\varphi|\varphi_{\rm tx})+\alpha \Phi(\varphi|\varphi_{\rm tx})=0.
\end{equation}
The concentration function is a symmetric function with respect to $\varphi=\varphi_{\rm tx}$. Thus, the possible solution of \eqref{sep1} is
\begin{equation}\label{Phi}
  \Phi_m(\varphi|\varphi_{\rm tx})=G_m \cos(\sqrt{\alpha}(\varphi-\varphi_{\rm tx})),
\end{equation}
where $G_m$ is an unknown constant.
The concentration function is also periodic with period $2\pi$ with respect to the $\varphi$ variable. Thus,  $\alpha=m^2$ is acceptable for all non-negative integer values of $m\in \mathbb{Z}_{+}$.

Considering the equality of the left hand side of \eqref{sep} with $\alpha=m^2$ and some simple manipulation, we obtain
\begin{align}\label{sep2}
&{r^2}(\frac{2}{r}\frac{{R'(r|{r_{\rm tx}})}}{{R(r|{r_{\rm tx}})}} + \frac{{R''(r|{r_{\rm tx}})}}{{R(r|{r_{\rm tx}})}}) - \frac{{T'(t|{t_0})}}{{DT(t|{t_0})}}{r^2} - \frac{{{k_{\mathrm d}}}}{D}{r^2} = \\
& - \frac{{\cos \theta }}{{\sin \theta }}\frac{{\Theta '(\theta |{\theta _{\rm tx}})}}{{\Theta (\theta |{\theta _{\rm tx}})}} - \frac{{\Theta ''(\theta |{\theta _{\rm tx}})}}{{\Theta (\theta |{\theta _{\rm tx}})}} + \frac{{{m^2}}}{{{{\sin }^2}\theta }}\overset{(b)}{=}\beta, \nonumber
\end{align}
where equality with constant $\beta$ in (b) holds, since we have two separated functions on the left and right hand sides of the first equality.
By defining $\beta=\nu(\nu+1)$ where $\nu$ is a real number, and with simple manipulations of the second equation in \eqref{sep2}, we obtain
 \begin{equation}\label{sep3}
  \Theta ''(\theta |{\theta _{\rm tx}}) + \frac{{\cos \theta }}{{\sin \theta }}\Theta '(\theta |{\theta _{\rm tx}}) + \left(\nu(\nu + 1) - \frac{{{m^2}}}{{{{\sin }^2}\theta }}\right)\Theta (\theta |{\theta _{\rm tx}}) = 0,
 \end{equation}
which is the well-known Legendre equation \cite{Mandelis}. The principal solution for \eqref{sep3} is given by
\begin{equation}\label{sep3s}
  \Theta (\theta ) = AP_\nu^m(\cos \theta ) + BQ_\nu^m(\cos \theta )
\end{equation}
where $P_\nu^m(\cdot)$ and $Q_\nu^m(\cdot)$ are the associated Legendre functions of the first and second kind, respectively with degree $\nu$ and order $m$ \cite{Mandelis}. Since $Q_\nu^m(\cos\theta)$ is singular at $\theta=0$ and $\theta=\pi$ for all values of $\nu$, we set $B=0$. Also, for non-integer values of $\nu$, $P_\nu^m(cos\theta)$ is singular at $\theta=\pi$. Therefore, $\Theta_{nm}(\theta|\theta_{\rm tx})=A_{nm}P_n^m(\cos \theta )$ is an acceptable solution of \eqref{sep3s} for each integer value of $n \in \mathbb{Z}$, where $A_{nm}$ is an unknown constant. Because of the linear dependency of $P_n^m(\cos \theta )=P_{-n-1}^m(\cos \theta )$, only \begin{equation}\label{Theta}
 \Theta_{nm}(\theta|\theta_{\rm tx})=A_{nm} P_n^m(\cos \theta )
  \end{equation}
   for non-negative integer values $n\in \mathbb{Z}_{+}$ are linearly independent solutions for \eqref{sep3s}.

Considering $\beta=n(n+1)$ in \eqref{sep2} and performing some simple manipulations, we obtain
\begin{equation}\label{sep4}
  D(\frac{2}{r}\frac{{R'(r|{r_{\rm tx}})}}{{R(r|{r_{\rm tx}})}} + \frac{{R''(r|{r_{\rm tx}})}}{{R(r|{r_{\rm tx}})}}) - D\frac{{n(n + 1)}}{{{r^2}}} = \frac{{T'(t|{t_0})}}{{T(t|{t_0})}} + k_{\mathrm d} \overset{(c)}{=}\gamma
\end{equation}
where equality with constant $\gamma$ in (c) holds, since we have two functions with separated variables on the left and right hand sides of the first equality.
From \eqref{sep4}, we have
\begin{equation}\label{sep5}
{r^2}R''(r|{r_{\rm tx}}) + 2rR'(r|{r_{\rm tx}}) + ({-\frac{\gamma}{D}}{r^2} - n(n + 1))R(r|{r_{\rm tx}}) = 0,
\end{equation}
which is the Bessel equation \cite{Beals}. The solution for \eqref{sep5} should satisfy the boundary condition given in \eqref{sep3B}, because it also includes $R(r|r_tx)$.
For each integer value $n$ and by defining $\gamma=-D\lambda_n^2$, the principal solution for \eqref{sep5} is
\begin{equation}\label{Bessl}
R_n(r|r_{\rm tx})=E_nj_n(\lambda_{n}r)+F_n y_n(\lambda_{n}r),
\end{equation}
for any $\lambda_n$ value, where $j_n(\cdot)$ and $y_n(\cdot)$ are the $n$th order of the first and second types of spherical Bessel function, respectively. Since $y_n(\lambda_{n}r)$ is singular at $r=0$, we set $F_n=0$. Also, $R_n(r|r_{\rm tx})=E_nj_n(\lambda_{n}r)$ should satisfy the boundary condition \eqref{sep3B}. This implies $\lambda_n$ satisfies the following equation:
\begin{equation}\label{lambda}
D{\lambda _{n}}{j_n}'({\lambda _{n}}r_s) =-k_{\mathrm f}j_n({\lambda _{n}}r_s).
\end{equation}
Corollary \ref{cor2} from Theorem \ref{Th1} below implies that only the sequence of positive roots of \eqref{lambda} results in linearly independent solutions for \eqref{sep5} subject to the boundary condition \eqref{sep3B}.
We denote the $k$th positive root of the above equation and corresponding possible solution for \eqref{sep5} by $\lambda_{nk}$ and
\begin{equation}\label{R}
R_{nk}(r|r_{\rm tx})=E_{nk}j_n(\lambda_{nk}r),
\end{equation}
 respectively.
\begin{theorem}\label{Th1}
Let $\lambda _{vk}, k=0,1\ldots$, be the sequence of positive zeros of the third type boundary condition
\begin{equation}\label{lambda2}
r_s{\lambda _{vk}}{j_v}'({\lambda _{vk}}r_s) =-\zeta j_v({\lambda _{vk}}r_s), v+\zeta>0
\end{equation}
where $\zeta$ is a real constant. The system of spherical Bessel functions $ {r{j_v}({\lambda _{vk}}r)}, k=0,1 \ldots$, is orthogonal and complete  for $r\in [0,r_s]$, where orthogonality is defined as
\begin{equation}\label{Orth}
  \int\limits_0^{r_s} {{j_v}({\lambda _{vk}}r )} {j_v}({\lambda _{vk'}}r )r^2 dr  = \left\{ {\begin{array}{*{20}{c}}
{\begin{array}{*{20}{c}}
{{N_{vk}}}&{k = k'}
\end{array}}\\
{\begin{array}{*{20}{c}}
0&{k \ne k'}
\end{array}}
\end{array}} \right.
\end{equation}
where $N_{vk}=\frac{{{{r_s}^3}}}{2}(j_v^2({\lambda _{vk}}r_s) - {j_{v - 1}}({\lambda _{vk}}r_s){j_{v + 1}}({\lambda _{vk}}r_s))$.
\end{theorem}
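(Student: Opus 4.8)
The plan is to recognize the boundary value problem consisting of the spherical Bessel equation \eqref{sep5} (with $\gamma=-D\lambda^2$, $R(r)=j_v(\lambda r)$) together with the Robin condition \eqref{lambda2} as a \emph{singular Sturm--Liouville problem} on $[0,r_s]$, obtain orthogonality and the constant $N_{vk}$ by the Lagrange-identity argument and a Lommel integral, and deduce completeness by reducing the system to a Fourier--Bessel (Dini) series. First I would put \eqref{sep5} in self-adjoint form, $\frac{d}{dr}\!\big(r^{2}R'\big)+\big(\lambda^{2}r^{2}-v(v+1)\big)R=0$, so the weight is $w(r)=r^{2}$ and $r=0$ is the singular endpoint where $p(r)=r^{2}$ vanishes.

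For orthogonality, take two distinct roots $\lambda_{vk}\neq\lambda_{vk'}$ and set $R_k=j_v(\lambda_{vk}r)$, $R_{k'}=j_v(\lambda_{vk'}r)$. Multiplying the equation for $R_k$ by $R_{k'}$, subtracting the symmetric expression, and integrating over $[0,r_s]$ yields
\[
(\lambda_{vk}^{2}-\lambda_{vk'}^{2})\!\int_0^{r_s}\! j_v(\lambda_{vk}r)\,j_v(\lambda_{vk'}r)\,r^{2}\,dr=\Big[r^{2}\big(R_k'R_{k'}-R_kR_{k'}'\big)\Big]_0^{r_s}.
\]
The integrated term vanishes at $r=0$ because $j_v(\lambda r)=O(r^{v})$ there, so $r^{2}\big(R_k'R_{k'}-R_kR_{k'}'\big)\to 0$ for $v>-\tfrac12$; and it vanishes at $r=r_s$ because \eqref{lambda2} gives $R_k'(r_s)=-\tfrac{\zeta}{r_s}R_k(r_s)$ for \emph{every} $k$, so $R_k'(r_s)R_{k'}(r_s)-R_k(r_s)R_{k'}'(r_s)=0$. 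Since $\lambda_{vk}^{2}\neq\lambda_{vk'}^{2}$, \eqref{Orth} follows. For $N_{vk}$ I would evaluate $\int_0^{r_s}j_v^{2}(\lambda r)r^{2}\,dr$ via $j_v(x)=\sqrt{\pi/(2x)}\,J_{v+1/2}(x)$, which turns it into $\tfrac{\pi}{2\lambda}\int_0^{r_s}J_{v+1/2}^{2}(\lambda r)\,r\,dr$; inserting the standard Lommel closed form and rewriting with the recurrences $j_{v-1}(x)+j_{v+1}(x)=\tfrac{2v+1}{x}j_v(x)$ and $j_v'(x)=j_{v-1}(x)-\tfrac{v+1}{x}j_v(x)$ collapses the result to $\tfrac{r_s^{3}}{2}\big(j_v^{2}(\lambda r_s)-j_{v-1}(\lambda r_s)j_{v+1}(\lambda r_s)\big)$. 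This identity holds for \emph{arbitrary} $\lambda$ (a quick check at $v=0$, where $j_0(x)=\sin x/x$, confirms it), so it applies at $\lambda=\lambda_{vk}$ without invoking \eqref{lambda2}.

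The main obstacle is \textbf{completeness}. The key step is the substitution $r\,j_v(\lambda_{vk}r)=\sqrt{\pi/(2\lambda_{vk})}\,\sqrt{r}\,J_{v+1/2}(\lambda_{vk}r)$, which shows that $\{r\,j_v(\lambda_{vk}r)\}_k$ is complete in $L^{2}[0,r_s]$ if and only if $\{J_{v+1/2}(\lambda_{vk}r)\}_k$ is complete in $L^{2}\big([0,r_s];\,r\,dr\big)$. The same substitution transforms \eqref{lambda2} into the Dini-type condition $x\,J_{v+1/2}'(x)+(\zeta-\tfrac12)J_{v+1/2}(x)=0$ evaluated at $x=\lambda_{vk}r_s$, so the $\lambda_{vk}r_s$ are exactly the positive zeros occurring in a Dini (Fourier--Bessel) series of order $\nu=v+\tfrac12$ with parameter $H=\zeta-\tfrac12$. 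The classical Fourier--Bessel/Dini expansion theorem then gives completeness precisely when $\nu+H>0$, i.e. $v+\zeta>0$ --- the stated hypothesis --- which is also the condition under which no zero or purely imaginary root needs to be adjoined to the system; this is what justifies restricting to the sequence of positive roots, as used in Corollary \ref{cor2}. I would close by noting that the hypothesis $v+\zeta>0$ is needed only for completeness: the orthogonality \eqref{Orth} and the formula for $N_{vk}$ hold for every real $\zeta$.
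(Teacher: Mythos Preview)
Your proposal is correct, and for the decisive step---completeness---you take exactly the route the paper takes: substitute $j_v(x)=\sqrt{\pi/(2x)}\,J_{v+1/2}(x)$ to pass from $r\,j_v(\lambda_{vk}r)$ to $\sqrt{r}\,J_{v+1/2}(\lambda_{vk}r)$, show that \eqref{lambda2} becomes the Dini condition $xJ'_{v+1/2}(x)+(\zeta-\tfrac12)J_{v+1/2}(x)=0$, and invoke the classical Fourier--Bessel/Dini completeness theorem, which requires $(v+\tfrac12)+(\zeta-\tfrac12)=v+\zeta>0$. The paper's appendix does precisely this and then cites a reference for the cylindrical result.

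Where you differ is in the treatment of orthogonality and the normalization constant $N_{vk}$: the paper simply inherits both from the cited cylindrical result, whereas you supply a self-contained Sturm--Liouville (Lagrange identity) argument for orthogonality and a Lommel-integral computation for $N_{vk}$. This buys you an explicit verification that the formula $N_{vk}=\tfrac{r_s^3}{2}\big(j_v^2-j_{v-1}j_{v+1}\big)$ holds for arbitrary $\lambda$ (not just the eigenvalues) and makes clear that orthogonality needs no restriction on $\zeta$; the paper's approach is shorter but leaves those points implicit in the reference.
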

\begin{proof}
The proof is presented in Appendix.
\end{proof}
\begin{corol}\label{cor1}
Let us set $v=n,\; n\in \mathbb{Z}_{+}$ and $\zeta=\frac{r_sk_{\mathrm f}}{D}$ in Theorem \ref{Th1}. Obviously $n+\frac{r_sk_{\mathrm f}}{D}>0$ for $n\in \mathbb{Z}_{+}$ and it is concluded that the system $r{{j_n}({\lambda _{nk}}r)}, k=0,1,\ldots$, where $\lambda _{nk}$ is the $k^{th}$ positive root of \eqref{lambda}, is orthogonal and complete for $r\in [0,r_s]$.
\end{corol}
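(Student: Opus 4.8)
The statement is an immediate specialization of Theorem~\ref{Th1}, so the plan is not to redo any of its orthogonality or completeness machinery but only to cast the defining relation \eqref{lambda} into the exact form \eqref{lambda2} and to confirm the single hypothesis of the theorem for the specified parameter choices. Concretely, I would start from the root equation \eqref{lambda}, $D\lambda_n j_n'(\lambda_n r_s)=-k_{\mathrm f}j_n(\lambda_n r_s)$, and multiply both sides by the positive constant $r_s/D$. This gives $r_s\lambda_n j_n'(\lambda_n r_s)=-\tfrac{r_s k_{\mathrm f}}{D}\,j_n(\lambda_n r_s)$, which is precisely \eqref{lambda2} once we identify $v=n$ and $\zeta=\tfrac{r_s k_{\mathrm f}}{D}$. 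Hence the positive roots $\lambda_{nk}$ of \eqref{lambda} coincide with the positive zeros $\lambda_{vk}$ of Theorem~\ref{Th1}, and the system $r\,j_n(\lambda_{nk}r)$ is exactly the system $r\,j_v(\lambda_{vk}r)$ treated there.

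Next I would verify the one hypothesis $v+\zeta>0$ of the theorem. Under the identification above this becomes $n+\tfrac{r_s k_{\mathrm f}}{D}>0$. Since $r_s>0$, $D>0$, and the forward reaction constant $k_{\mathrm f}>0$ for a genuinely (partially) absorbing boundary, the term $\tfrac{r_s k_{\mathrm f}}{D}$ is strictly positive; combined with $n\ge 0$ for $n\in\mathbb{Z}_{+}$, the sum is strictly positive for every admissible $n$. (Should one also wish to accommodate the purely reflective limit $k_{\mathrm f}=0$, the inequality remains valid for all $n\ge 1$.)

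With the boundary form matched and the hypothesis confirmed, Theorem~\ref{Th1} applies verbatim and delivers the orthogonality and completeness of $\{r\,j_n(\lambda_{nk}r)\}_{k=0,1,\ldots}$ on $[0,r_s]$, which is exactly the assertion of the corollary. I expect the only step needing genuine attention to be the sign check in the second paragraph; no part of the proof of Theorem~\ref{Th1} has to be reworked.
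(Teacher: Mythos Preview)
Your proposal is correct and matches the paper's approach exactly: the corollary is stated as an immediate consequence of Theorem~\ref{Th1}, with the only verification being the check that $n+\tfrac{r_s k_{\mathrm f}}{D}>0$, which you carry out. The paper gives no separate proof beyond the ``Obviously'' clause embedded in the corollary's statement, so your explicit rewriting of \eqref{lambda} into the form \eqref{lambda2} is, if anything, slightly more detailed than what the paper records.
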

\begin{corol}\label{cor2}
Based on Corollary \ref{cor1}, the functions $r{{j_n}({\lambda _{nk}}r)}, k=0,1,\ldots$ with only positive roots (${\lambda _{nk}}$) of \eqref{lambda} are linearly independent. It can then be shownthat ${{j_n}({\lambda _{nk}}r)}, k=0,1\ldots$, with positive roots (${\lambda _{nk}}$) of \eqref{lambda} constitute all linearly independent solutions for \eqref{sep5} subject to the boundary condition \eqref{sep3B}.
\end{corol}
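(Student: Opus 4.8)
My plan is to treat the two assertions of the corollary separately: first the linear independence of the positive-root family, which follows almost immediately from the orthogonality already in hand, and then the exhaustiveness claim (that these are \emph{all} the independent solutions), where the real work lies.

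For linear independence, I would invoke the orthogonality relation \eqref{Orth} of Theorem \ref{Th1} as specialized in Corollary \ref{cor1}. Suppose a finite combination $\sum_k c_k\, r j_n(\lambda_{nk} r)$ vanishes identically on $[0,r_s]$. Pairing with $r j_n(\lambda_{nk'} r)$ under the integral in \eqref{Orth} kills every cross term and leaves $c_{k'} N_{nk'} = 0$; since $N_{nk'} \neq 0$, this forces $c_{k'}=0$ for each $k'$. Because the factor $r$ is nonzero on $(0,r_s]$, the functions $j_n(\lambda_{nk} r)$ are themselves linearly independent. This disposes of the first sentence of the corollary.

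For the exhaustiveness claim I would argue that no admissible eigenfunction lies outside the listed family. Any solution of \eqref{sep5} that is regular at the origin must be a multiple of $j_n(\lambda r)$ for some $\lambda$, since the second solution $y_n$ was already discarded as singular in \eqref{Bessl}; imposing \eqref{sep3B} then forces $\lambda$ to be a root of \eqref{lambda}. It remains to show that the \emph{positive} roots alone give a maximal independent (and complete) set, i.e.\ that negative, zero, and any complex roots contribute nothing new. I would handle these in three reductions. (i) The parity identity $j_n(-x)=(-1)^n j_n(x)$ shows that a negative root $-\lambda_{nk}$ yields an eigenfunction proportional to the one from $\lambda_{nk}$, hence linearly dependent. (ii) At $\lambda=0$ the equation \eqref{sep5} degenerates to the Euler equation with regular solution $r^n$; substituting $r^n$ into \eqref{sep3B} gives $Dn=-k_{\mathrm f} r_s$, i.e.\ $n+\zeta=0$ with $\zeta=r_s k_{\mathrm f}/D$, which is forbidden by the hypothesis $n+\zeta>0$ of Corollary \ref{cor1}, so $\lambda=0$ is not an eigenvalue. (iii) Complex roots are excluded by reality of the spectrum, which is the crux of the argument.

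The main obstacle is step (iii), and I would establish it by an energy identity rather than by quoting self-adjointness abstractly, so that the singular endpoint $r=0$ and the Robin coefficient are both handled explicitly. Writing \eqref{sep5} in the Sturm--Liouville form $(r^2 R')' + (\lambda^2 r^2 - n(n+1))R = 0$, multiplying by $\bar R$, integrating over $[0,r_s]$, and integrating by parts gives, after using \eqref{sep3B} at $r=r_s$ and the fact that $r^2 R'\bar R\sim r^{2n+1}\to 0$ at $r=0$,
\[
\lambda^2 = \frac{\frac{k_{\mathrm f}}{D}\,r_s^2\,|R(r_s)|^2 + \int_0^{r_s} r^2 |R'(r)|^2\,dr + n(n+1)\int_0^{r_s} |R(r)|^2\,dr}{\int_0^{r_s} r^2 |R(r)|^2\,dr}.
\]
Every term on the right is nonnegative for $k_{\mathrm f}\ge 0$ and $n\in\mathbb{Z}_{+}$, and the denominator is strictly positive for a nontrivial $R$; hence $\lambda^2\in\mathbb{R}_{\ge 0}$, so $\lambda$ is real and may be taken positive, ruling out purely imaginary (and all complex) roots. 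The same quotient re-confirms that $\lambda^2=0$ would force $R\equiv 0$ whenever $n+\zeta>0$, recovering reduction (ii). With the candidate eigenfunctions thereby reduced to $\{j_n(\lambda_{nk} r)\}$ over positive roots, the completeness half of Corollary \ref{cor1} guarantees this linearly independent family spans the solution space, so it constitutes all the independent solutions of \eqref{sep5} subject to \eqref{sep3B}.
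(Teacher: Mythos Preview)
Your proof is correct and, in fact, considerably more detailed than what the paper offers: the paper states Corollary~\ref{cor2} without proof, merely asserting that linear independence follows from Corollary~\ref{cor1} and that ``it can then be shown'' that the positive-root family exhausts the admissible solutions. So there is no paper proof to compare against in any meaningful sense; you have supplied the argument the paper omits.

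Your linear-independence argument via orthogonality is exactly the intended one-line deduction from Corollary~\ref{cor1}. Your exhaustiveness argument---reducing to regular solutions $j_n(\lambda r)$, then eliminating negative roots by parity, $\lambda=0$ by the Euler-equation check against $n+\zeta>0$, and complex $\lambda$ by the Rayleigh-quotient identity---is a clean self-contained treatment that makes explicit what the paper leaves to the reader. The energy identity you derive is correct, including the boundary-term handling at both endpoints.

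One small remark: your closing sentence invokes completeness to conclude that the family ``spans the solution space,'' but this is not needed and slightly misframes the point. Your reductions (i)--(iii) already show directly that every admissible eigenvalue is a positive root of \eqref{lambda}, so every regular solution of \eqref{sep5} satisfying \eqref{sep3B} is already a scalar multiple of some $j_n(\lambda_{nk}r)$. Completeness in $L^2$ is a separate (and stronger) statement about approximation of arbitrary functions, not about the ODE solution set; you can simply drop that appeal.
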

%

To obtain $T(t|t_0)$, we consider the following ordinary differential equation from \eqref{sep4}
\begin{equation}\label{timeeq}
T'(t|{t_0})+(k_{\mathrm d}-\gamma)T(t|t_0)=0.
\end{equation}
 Given $\lambda_{nk}$ and considering the implicit condition of  $T(t\to \infty|t_0)=0$, the principle solution for \eqref{timeeq} is
 \begin{equation}\label{T}
T_{nk}(t|t_0)=I_{nk}e^{(-D\lambda_{nk}^2-k_{\mathrm d}) (t-t_0)}u(t-t_0),
\end{equation}
where $I_{nk}$ is an unknown constant.

Considering \eqref{Eqs} and the obtained separated solutions of \eqref{Phi}, \eqref{Theta}, \eqref{R}, and
\eqref{T}, the principal solution of the primal diffusion equation \eqref{fick} subject to the boundary condition \eqref{BD1} is given by
\begin{align}\label{Cf}
C(\bar r ,t|\bar r_{\rm tx},{t_0}) = \sum\limits_{n = 0}^\infty  {\sum\limits_{m = 0}^n {\sum\limits_{k = 1}^\infty  {{H_{mnk}}\cos (m(\varphi  - {\varphi _{\rm tx}}))} } } \\
 \times P_n^m(\cos \theta ){j_n}({\lambda _{nk}}r){e^{(-D\lambda_{nk}^2-k_{\mathrm d}) (t-t_0)}u(t-t_0)},\nonumber
\end{align}
where $H_{mnk}=G_mA_{nm}E_{nk}I_{nk}$ is an unknown constant and should be determined by applying the initial condition given in \eqref{sepin}. We note that for $m> n$, $P_n^m(\cos\theta)=0$ but we have only considered $m\leq n$ in the series.

To determine $H_{mnk}$, we expand the Dirac delta functions ${\delta (\varphi  - {\varphi _{\rm tx}})}$, ${\delta (\theta-\theta_{\rm tx})}$, and $\delta (r - {r_{\rm tx}})$ in initial condition \eqref{sepin} based on Fourier, Legendre, and Bessel series, respectively.
The function ${\delta (\varphi  - {\varphi _{\rm tx}})}$ in the interval $0<\varphi<2\pi$ can be represented by the Fourier series \cite[Eq. (5.46)]{Mandelis}
 \begin{equation}\label{dphi}
 {\delta (\varphi  - {\varphi _{\rm tx}})} =\sum\limits_{m=0}^\infty  {L_m \cos (m(\varphi  - \varphi _{\rm tx}))},
\end{equation}
where $L_0=\frac{1}{2\pi}$ and $L_m=\frac{1}{\pi}, m\geq 1$. The function ${\delta (\theta-\theta_{\rm tx})}, 0\leq \theta \leq \pi$ is expanded based on the orthogonal complete basis of Legendre functions $P_n^m(\cos {\theta})$, $n=0,1\ldots$, as follows \cite{GF}:
 \begin{equation}\label{dtheta}
 {{\delta (\theta  - {\theta _{\rm tx}})}} =\sin \theta  \sum\limits_{n = 0}^\infty   \frac{{2n + 1}}{2}\frac{{(n - m)!}}{{(n + m)!}} {P_n^m(\cos {\theta _{\rm tx}}) P_n^m(\cos \theta )}.
 \end{equation}
%
Considering Corollary \ref{cor1}, ${\delta (r - {r_{\rm tx}})}, 0\leq r \leq r_s$, can be expanded based on the orthogonal and complete system $r{{j_n}({\lambda _{nk}}r)}, k=0,1\ldots$, as follows:
 \begin{equation}\label{dr}
 \delta (r - {r_{\rm tx}}) = \sum\limits_{k = 1}^\infty  {w_{nk}r^2} {{j_n}({\lambda _{nk}}{r})},
 \end{equation}
where
\begin{equation}\label{Nmn}
w_{nk}=\frac{\int_{0}^{r_s} \delta (r - {r_{\rm tx}})  {j_n}({\lambda _{nk}}{r})r^2dr}{\int_{0}^{r_s}   {j_n^2}({\lambda _{nk}}{r})r^2dr}= \frac{{j_n}({\lambda _{nk}}r_{\rm tx})}{\frac{{r_s^3}}{2}({j_n}({\lambda _{nk}}{r_s}) - {j_{n - 1}}({\lambda _{nk}}{r_s}){j_{n + 1}}({\lambda _{nk}}{r_s}))}\;\;.
\end{equation}

Substituting $C(\bar r ,t=t_0|\bar r_{\rm tx},{t_0})$ from \eqref{Cf}, Delta functions from \textbf{\eqref{dphi}-\eqref{dr}} in the initial condition \eqref{sepin}, and comparing left and right sides of the equation, we obtain

 \begin{equation}
 H_{mnk} = {L_m}\frac{{2n + 1}}{2}\frac{{(n - m)!}}{{(n + m)!}}\frac{P_n^m(\cos {\theta _{\rm tx}}){{j_n}({\lambda _{nk}}{r_{\rm tx}})}}{{{\frac{{r_s^2}}{3}({j_n}({\lambda _{nk}}{r_s}) - {j_{n - 1}}({\lambda _{nk}}{r_s}){j_{n + 1}}({\lambda _{nk}}{r_s}))} }}.
  \end{equation}
In the following, we remark on two properties of the obtained CGF.
\begin{remark}
Examining the CGF given in \eqref{Cf} reveals its \textit{reciprocity property}. In fact, the CGF does not change by interchanging the location of the observation point $(r,\theta,\varphi)$ and the point source transmitter $(r_{\rm tx},\theta_{\rm tx},\varphi_{\rm tx})$. This leads to the reciprocity of the corresponding DMC channel which may be exploited when analyzing and designing DMC networks.
\end{remark}

\begin{remark}
When the transmitter is located at the origin, the problem has elevation and azimuthal symmetry and the CGF is independent of $\phi$ and $\theta$ coordinates. In this case, the diffusion problem \eqref{Eq} simplifies to
\begin{equation}
  \frac{D}{{{r^2}}}\frac{\partial }{{\partial r}}({r^2}\frac{{\partial C(r,t|{t_0})}}{{\partial r}})
 - {k_{\mathrm d}}C(r,t|{t_0})+\frac{{\delta (r)\delta(t-t_0)}}{ r^2 } = \frac{{\partial C(r,t|{t_0})}}{{\partial t}},
\end{equation}
By the same procedure used above to derive the CGF, we obtain the CGF in this special case as
 \begin{equation}
C(r,t|{t_0}) = \sum\limits_{k = 1}^\infty  {{j_0}({\lambda _k}r){e^{ - {\lambda _k}^2D(t - {t_0})}}} ,
\end{equation}
where $\lambda_k$ is the $k$th root of the following equation:
\begin{equation}
  D(\lambda_k {j'_0}(\lambda_k r_s)=-k_{\mathrm f}j_0(\lambda_k r_s).
\end{equation}
Analogously, when the receiver is located at the origin and the transmitter is at an arbitrary location with radius $r_{\rm tx}$, the reciprocity property implies that the CGF is given by
  \begin{equation}
C(r=0,t|{t_0}) = \sum\limits_{k = 1}^\infty  {{j_0}({\lambda _k}r_{\rm tx}){e^{ - {\lambda _k}^2D(t - {t_0})}}}.
\end{equation}
\end{remark}

\section{Characterization of Received Signal}
In this section, the received signal at the receiver is characterized by employing the obtained CGF. Finally, the error probability of DMC with a simple on-off keying modulation over this channel is derived. The results in this section are adapted from our previous work \cite{Zoofaghari18} which is necessary to support the results that will come in Section V.

Based on our analysis in the previous section, and assuming an impulsive point source, the CGF $C(\bar{r},t|\bar{r}_{\rm tx},t_0)$ is given by \eqref{Cf}. %
We note that the differential equation system in \eqref{fick} with source input $S(\bar r,t,{\bar{r}_{\rm tx}},t_0)$ and output $C(\bar{r},t|\bar{r}_{\rm tx},t_0)$ is linear and time invariant. Therefore, given an arbitrary transmitter (not necessarily a point source or with instantaneous release) of $S(\bar r,t), \bar r \in \Omega$, the concentration at an arbitrary observation point $\bar{r}=(r,\theta,\varphi)$ is obtained as
\begin{equation}\label{out}
\iiint_\Omega C(\bar r,t|\bar r',t_0=0)*S(\bar r',t) r'^2 \sin \theta' dr' d\theta' d\varphi',
\end{equation}
where $*$ is the convolution operator and $C(\bar r,t|\bar r',t_0=0)$ is given by \eqref{Cf}. For a point source transmitter located at $\bar r_{\rm tx}$ with molecule release rate of $s(t)$, $S(\bar r',t)=s(t) \frac{{\delta (r'  - {r _{\rm tx}})\delta (\theta'-\theta_{\rm tx})\delta (\varphi'  - {\varphi _{\rm tx}})}}{r'^2 \sin\theta'}$, \eqref{out} simply reduces to $s(t)*C(\bar{r},t|\bar{r}_{\rm tx},t_0)$.

To derive the probability density, consider a point source transmitter located at $\bar{r}_{\rm tx}=(r_{\rm tx},\theta_{\rm tx},\varphi_{\rm tx})$ and a transparent receiver where the set of points inside the receiver is denoted by $\Omega_{\rm rx}$. Given the CGF \eqref{Cf}, the probability density function (PDF) of observation of a molecule, released from the point source transmitter at time $t_0$ inside a transparent receiver at time $t$ is obtained as
\begin{align}\label{eqlemm2}
p_{\rm obs}(t)=\iiint_{\Omega_{\rm rx}}{{C(\bar {r},t|{{\bar {r}}_{\rm tx}},{t_0})} r^2 \sin \theta dr d\theta d\varphi},
\end{align}
For a spherical receiver with small radius $R_{\rm rx}$ compared to the distance between receiver and transmitter, the concentration variation is negligible inside the receiver. Therefore, the probability density of the observation time given in \eqref{eqlemm2} is approximated by
\begin{align}\label{POBS}
\frac{4\pi}{3}R_{\rm rx}^3 {C(\bar r_{\rm rx},t|{\bar r_{\rm tx}},{t_0})},
\end{align}
where $\bar r_{\rm rx}$ is the center of the receiver.

Assuming the average modulated signal $s(t)$  for $t\in [0, T_0]$, the release rate of molecules can be modeled as a Poisson process \cite{Ion},
\begin{align}\label{TXmodel}
\textbf{s}(t)\sim \mathrm{Poisson} (s(t)).
\end{align}
Thus, the number of the molecules observed at the receiver at time $t\in[0,T_0]$, $\textbf{y}(t)$, originating from the molecules released in interval $[0,T_0]$, follows the Poisson process of \cite{Ion}
\begin{align}
\textbf{y}(t)\sim \mathrm{Poisson} \left(s(t)*p_{\rm obs}(t)\right).
\end{align}

Similarly, the residual ISI from the previous time slots can be derived.
Let $j$ denote the time slot number such that $j=0$ refers to the current time slot $[0,T_0]$ and $j>0$ denotes a previous time slot $[-jT_0,-(j-1)T_0]$. We assume that the average modulated signal in time slot $j$ corresponding to the input symbol for transmission in this time slot is denoted by $s_j(t+jT_0)$.
We also assume that the diffusion channel has memory of length $M$ time slots. Then the total ISI affecting the receiver output originating from $M$ previously transmitted symbols in the current time slot, $\textbf{I}(t)$, follows the Poisson process \cite{Ion}
\begin{align}\label{ISI}
\textbf{I}(t)~\sim \mathrm{Poisson} \left(\sum_{j=1}^{M}s_j(jT_0+t)*p_{\rm obs}(jT_0+t)\right).
\end{align}
Therefore, given the current transmitted modulated signal, $\textbf{s}_0(t)$, the receiver observation at sampling time $t_s$ in the current time slot is $\textbf{y}_R=\textbf{y}(t_s)+\textbf{I}(t_s)$ which is a Poisson distributed RV with mean
\begin{align}
{y}_R(t)&={s_0(t_s)*p_{\rm obs}(t_s)}+\sum_{j=1}^{M}{s_j(jT_0+t_s)*p_{\rm obs}(jT_0+t_s)}=\sum_{j=0}^{M}{s_j(jT_0+t_s)*p_{\rm obs}(jT_0+t_s)}.
\end{align}

\subsection{Simple On-off Keying DMC System}
To evaluate the DMC system in BSE, a simple on-off keying modulation scheme is considered where 0 and 1 are represented by releasing 0 and $N$ molecules (on average) by the transmitter, respectively.
The transparent receiver counts the number of molecules inside the receiver volume at sampling time $t_s$ (which maximize $p_{\rm obs}(t)$) in each time slot.
The receiver uses the observed sample to decide about the transmitted bit.

Given the transmitted bits $B_i={b}_i, i\in\{0,1,\ldots,M\}$, the average modulated signal in time slot $i\in\{0,1,\ldots,M\}$ is $s_i(t+iT_0)=Nb_i\delta(t+iT_0)$. As shown in the last subsection, $\textbf{y}_R$ is a Poisson distributed RV, i.e.,
\begin{align}
&{\rm Pr}(\textbf{y}_R=y| b_0, {b}_1,\ldots,{b}_M)=\frac{e^{-\mathbb{E}(\textbf{y}_R| b_0, {b}_1,\ldots,{b}_M)}(\mathbb{E}(\textbf{y}_R| b_0, {b}_1,\ldots,{b}_M))^y}{y!},
\end{align}
in which
\begin{align}
&\mathbb{E}(\textbf{y}_R| b_0, {b}_1,\ldots,{b}_M)=\sum_{i=0}^{M}{{b}_iN\delta(iT_0+t)*p_{\rm obs}(iT+t)}=\sum_{i=0}^{M}{{b}_iNp_{\rm obs}(iT_0+t)},
\end{align}
where ${\rm Pr}(\cdot)$ and $\mathbb{E}(\cdot)$ denote probability function and expectation operator, respectively.

For error probability analysis, a genie-aided decision feedback (DF) detector \cite{Mosayebi2014} is assumed where a genie informs the detector of the \emph{previously} transmitted bits, i.e., $\hat{B}_i=B_i,\;i=1,\ldots,M$. Given the correct values of the previously transmitted bits, i.e., $B_i={b}_i, i\in\{1,\ldots,M\}$, are known at the decoder and ${\rm Pr}(B_0=1)={\rm Pr}(B_0=0)=\frac{1}{2}$, the Maxiumum-A-Posteriori (MAP) detector for bit $B_0$ given receiving $Y=y$ molecules in the current time slot becomes
\begin{align}\label{map}
\hat{B}_0=\mathrm{arg}\max_{b_0\in\{0,1\}}{{\rm Pr}(\textbf{y}_R=y| b_0, {b}_1,\ldots,{b}_M)},
\end{align}
where $\hat{B}_0$ denotes the estimated transmitted bit in the current time slot. Practically, the previously transmitted bits $B_i=b_i,\; i\in\{1,\ldots,M\}$, are not known. Therefore, previous decisions $\hat{B}_i=\hat{b}_i,\; i\in\{1,\ldots,M\}$, have to be used in \eqref{map} instead.

Simplifying \eqref{map} leads to a threshold decision rule based on the receiver output in the current time slot, $y$, \cite{Ion} i.e., $\hat{B}_0=0$, if $y\leq   {\rm Thr}$, and $\hat{B}_0=1$, if $y> {\rm Thr}$, where
\begin{align}\label{map_thr}
\mathrm{Thr}=\frac{{Np_{\rm obs}(t_s)}}{\ln \left(1+\frac{{Np_{\rm obs}(t_s)}}{\sum_{i=1}^{M}{Nb_ip_{\rm obs}(iT_0+t_s)}}\right)}.
\end{align}
The error probability of this detector is given by
\begin{align}\label{errorprob}
P_{\rm error}=\left(\frac{1}{2}\right)^{M+1}\sum_{b_0,\ldots ,b_M} {{\rm Pr}(E|b_0,b_1,\ldots,b_M)},
\end{align}
where $E$ is an error event, and we have
\begin{align}\label{cond_error}
&{\rm Pr}(E|b_0,b_1,\ldots,b_M)=\nonumber\\
&\sum_{y \underset{b_0=0}{\overset{b_0=1}{\lessgtr}} {\rm Thr}}{\frac{e^{-\mathbb{E}(\textbf{y}_R| b_0,b_1,\ldots,b_M)}(\mathbb{E}(\textbf{y}_R| b_0, b_1,\ldots,b_M))^y}{y!}}.
 \end{align}
 \color{black}


\section{Simulation and Numerical Results}
In this section, the effect of system parameters on the observation time PDF for diffusion in the considered BSE is investigated. Moreover, the performance of the point-to-point DMC system in this environment is evaluated.
To confirm the proposed analysis of observation time PDF (and correspondingly the CGF), a particle based simulator (PBS) is used.
In the PBS, time is divided into time steps of $\Delta t$ s. In each time step, the molecule locations are updated following random Brownian motion. The molecules move independently in the 3-dimensional space where the displacement of a molecule in $\Delta t$ s is modeled as a Gaussian RV with zero mean and variance $2D\Delta t$, in each dimension (Cartesian coordinates). Considering the degradation reaction given in \eqref{deg1}, a molecule may be removed from the environment during a time step $\Delta t$ s, with probability $k_{\mathrm d}\Delta t$ \cite{Elka}.
The boundary is fully covered by the receptor proteins characterized by \eqref{deg2}. Therefore, if a molecule hits the boundary, the molecule may bind with receptor $R$ and produce complex $AR$ with probability $k_{\mathrm f} \sqrt{\frac{\pi \Delta t}{D}}$ and may be reflected with probability of $1-k_{\mathrm f} \sqrt{\frac{\pi \Delta t}{D}}$ \cite{Elka}. Employing this probability for simulating the boundary condition results in quantitatively accurate PBS, when the simulation time steps or binding coefficients are very small (more precisely $k_{\mathrm f} \sqrt{\frac{\Delta t}{2D}}\ll 1/ \sqrt{2\pi}$)\cite{Andrew10}.
\begin{table}
	\begin{center}
		 		 \caption{ِDMC system parameters used for analytical and simulation results} \label{table1}
		\begin{tabular}{|c|c|c|}
			\hline
			\bfseries{Parameter} & \bfseries{Variable} & \bfseries{Value}   \\
			\hline \hline
			Diffusion coefficient & $D$ & $ 10^{-9}\;\si{m^2.s^{-1}}$\\
			\hline
			Sphere radius & $r_s$ & $5, 6, 7, 10, \infty$ $\si{\mu m}$\\
			\hline
			Point source transmitter location & $(r_{\rm tx},\theta_{\rm tx},\varphi_{\rm tx})$ &\textbf{ $(3\si{\mu m},\pi/2,0),(0.25\si{\mu m},\pi/2,0)$}\\
			\hline
			Degradation reaction constant inside   & $k_{\mathrm d}$ & $0, 20$ $s^{-1}$\\the sphere & &  \\
			\hline
			Ligand-receptor reaction constant over & $k_{\mathrm f}$ & $0, 100, \infty$ $\si{\mu m.s^{-1}}$\\ the surface & & \\
			
			\hline
			Receiver radius & $R_{\rm rx}$ & $1 \si{\mu m}$ \\
			\hline
			Number of transmitted molecules for bit `1'& $N$ & $5 \times 10^4$ \\
			\hline
			Time step in PBS& $\Delta t$ & $10^{-5} \si{s}$ \\
			\hline
		\end{tabular}
	\end{center}
\end{table}
The point source transmitter is located at $(r_{\rm tx},\theta_{\rm tx},\varphi_{\rm tx})=(3\si{\mu m},\pi/2,0)$ and the diffusion coefficient is $D=10^{-9} \si{m^2.s^{-1}}$. The system parameters used for all of the analytical and simulation results are presented in Table \ref{table1}.

Fig. \ref{Fig1} compares the observation time PDF obtained from our analysis given in \eqref{POBS} and PBS, when the receiver center is located at $r_{\rm rx}=4\,\si{\mu m}$ with different elevation and azimuth coordinates $\theta_{\rm rx}=\{\pi/4,\pi/2\}$ and $\varphi_{\rm rx}= \{0,\pi/2,3\pi/4\}$
 when $r_s=5 \si{\mu m}$, $k_{\mathrm f}=100$ $\si{\mu m.s^{-1}}$, and $k_{\mathrm d}=20 s^{-1}$. It is observed that the PBS confirms the proposed analysis, capturing the PDF variations in azimuth and elevation coordinates in addition to the radial coordinate. Also, Fig. \ref{Fig1} depicts the observation time PDF obtained from analysis for a receiver located at $r_{\rm rx}=4\si{\mu m}$ and different elevation and azimuth coordinates, when the transmitter is located closer to the origin, i.e., $(r_{\rm tx},\theta_{\rm tx},\varphi_{\rm tx})=(0.25\si{\mu m},\pi/2,0)$. Comparing with the PDFs for the two transmitter distances(i.e.,$r_{\rm tx}=0.25 \si{\mu m}$ and $r_{\rm tx}=3 \si{\mu m}$), we deduce that PDF variation in elevation and azimuth coordinates decreases when the transmitter becomes close to the origin. This occurs because the elevation and azimuth symmetry increases when the transmitter is closer to the origin. Obviously, the transmitter that is located exactly at the origin leads to perfect symmetry with respect to the elevation and azimuth coordinates.

In Fig. \ref{Fig2}, the observation time PDF obtained from the analysis in \eqref{POBS} and the PBS for different spherical environment radius values $r_s=\{5,6,7,10\} \si{\mu m}$ and also unbounded environment are compared, when the receiver is located at $(4\si{\mu m}, \pi/4,3\pi/4)$, $k_{\mathrm f}=100$ $\si{\mu m.s^{-1}}$, and $k_{\mathrm d}=20 s^{-1}$. We observe that the PBS confirms the analytical results. Fig. \ref{Fig2} also shows that the observation PDF is significantly amplified for smaller sphere radius values. Moreover, we observe that the unbounded approximation may still be valid and useful for a sufficiently large environment radius.

For the different scenarios used in Fig. \ref{Fig2}, the performance of simple on-off keying DMC system in terms of bit error rate is shown in Fig. \ref{Fig7} when $N=5\times 10^4$ and the center of the transparent spherical receiver with radius $R_{\rm rx}=0.5 \si{\mu m}$ is located at $(4\si{\mu m},\pi/4,3\pi/4)$. The receiver observes the number of molecules inside its volume at the sampling time at which the observation probability is maximized.
The BER for different scenarios obtained from \eqref{errorprob} has been depicted versus time slot duration, $T_0$, when the channel memory is adopted $0.2$ s (correspondingly $M=0.2/T_0$ bits). The analytical BERs are verified by a Monte Carlo simulation with $10^7$ bits in which the received signal in each time slot at the receiver is generated based on the presented model and not by following particle movements.\footnote{A Monte Carlo simulation has been employed for verifying BER results, since using the PBS takes very long time for large number of bits (here $10^7$ bits).} As expected, the BER is a decreasing function of time slot duration, because for a shorter time slot duration (higher transmission rate), a higher memory and more ISI is encountered. It is observed that the BER increases and approaches the BER of unbounded environment for radius values higher than a threshold which is around 7 \si{\mu m} for the adopted parameters. The threshold depends on the positions of the transmitter and receiver, i.e., their distance from each other and from boundaries. In particular, the unbounded assumption can be adopted when the distance between transmitter and receiver is sufficiently smaller than their distances from the boundaries.

Fig. \ref{Fig4} depicts the observation time PDF in the presence of the degradation reaction with $k_{\mathrm d}=0$ and $20$, with different boundary conditions including absorbing boundary $(k_{\mathrm f}\to \infty$), reflective boundary ($k_{\mathrm f}=0$), partially absorbing ($k_{\mathrm f}=10^{-4}$) boundary, and unbounded environment ($r_s\to \infty$). The observation time PDF obtained from the PBS and analysis \eqref{POBS} has been depicted for a receiver located at $(4\si{\mu m},\pi/4,3\pi/4)$ when $r_s=5 \si{\mu m}$ and $\bar r_{\rm tx}=(3\si{\mu m},\pi/2,0)$. The PBS confirms the proposed analytical results, in all scenarios.

It is observed that both the degradation and the (partially) absorbing boundary attenuate the observation probability (correspondingly the gain of the diffusion channel) from one side and shorten the tail of the observation probability curve (correspondingly the memory of the diffusion channel) from the other side. As a result, a trade-off between the gain and memory of the diffusion channel exists in the presence of degradation and partially absorbing boundary. For instance, PDF for unbounded environment has higher amplitudes (and larger memory) compared to the absorbing boundary, since the molecules hitting the absorbing boundary are removed and do not return to the environment. On the other hand, the PDF for the unbounded environment has lower amplitude (and smaller memory) compared to the reflective boundary, since the diffusion of molecules is confined within the boundary when the boundary is reflective leading to the higher concentration and memory inside the sphere.

For the different scenarios used in Fig. \ref{Fig4}, the performance of the simple on-off keying DMC system in terms of bit error rate is shown in Fig. \ref{Fig5} when $N=5\times 10^4$ and the center of the transparent spherical receiver with radius $R_{\rm rx}=0.5 \si{\mu m}$ is located at $(4\si{\mu m},\pi/4,3\pi/4)$. The receiver observes the number of molecules inside its volume at the sampling time at which the observation probability is maximized.
The BER for different scenarios obtained from \eqref{errorprob} has been depicted versus time slot duration, $T_0$, and verified by Monte Carlo simulation.

As expected, the BER is a decreasing function of time slot duration, because for a shorter time slot duration (higher transmission rate), a higher memory and more ISI is encountered.
It is also observed that the BER for the partially absorbing boundary compared to the fully absorbing and reflective boundaries is lower. Comparing with the reflective boundary, the partially absorbing boundary has lower channel gain, but encounters less ISI as observed in Fig. \ref{Fig4}. In this comparison, the effect of ISI is dominant leading to improved BERs for the partially absorbing scenario. Compared to the fully absorbing boundary, the partially absorbing boundary has higher channel gain, while it encounters higher ISI as observed in Fig. \ref{Fig4}. In this case, the effect of channel gain is dominant and the result is a lower BER. This reveals the trade-off between the gain and memory of the diffusion channel resulting from the absorbing boundary, as discussed above.

\section{Conclusion}
A BSE was considered for a DMC system in which the molecules are exposed to a degradation reaction inside and irreversible receptor proteins over the inner boundary of the environment sphere. The concentration Green's function of diffusion in this environment was analytically derived, which takes into account asymmetry in all radial, elevation, and azimuth coordinates. Correspondingly, the received signal at the receiver was characterized. The presented model and analysis can be used to predict the drug concentration profile in biological sphere-like entities for drug delivery applications. Based on our analysis, it was revealed that the information channel is reciprocal in the described environment. Furthermore, the provided analysis enables us to examine and validate the conventional unbounded environment assumption. To examine the communications performance of the DMC system in this biological sphere, a simple on-off keying modulation scheme was adopted. We observed how the degradation reaction and partially absorbing boundary may result in a trade-off between the channel gain and channel memory. Considering DMC in the biological sphere with reversible receptor proteins over the boundary is left for future work.

\appendix
\section{Proof of Theorem 1}\label{App1}
The spherical Bessel function of order $v$, $j_v(\lambda _{vk}r)$, is related to the cylindrical Bessel function of order $v+0.5$, ${J_{v + 0.5}}({\lambda _{vk}}r)$, as follows \cite[Eq.~(7.46)]{Andrews}:
\begin{equation}\label{SphCyl}
j_v(\lambda _{vk}r)=\sqrt {\frac{\pi}{{2{\lambda _{vk}}r}}} {J_{v+ 0.5}}({\lambda _{vk}}r).
\end{equation}
Substituting \eqref{SphCyl} into the boundary condition \eqref{lambda2} and with some simple manipulation, we obtain
\begin{equation}\label{BR2}
  r_s\lambda _{vk}J{'_{v + 0.5}}(\lambda _{vk}{r_s})=( - \zeta +0.5){J_{v + 0.5}}(\lambda _{vk}{r_s})
\end{equation}

Therefore, the system $r j_v(\lambda _{vk}r)$ with $\lambda_{vk}, k=0,1,\ldots$ as a sequence of roots of \eqref{lambda2}, is equivalent to the system $\sqrt{r}{J_{v + 0.5}}({\lambda _{vk}}r)$ with $\lambda_{vk}, k=0,1,\ldots$ as a sequence of roots of \eqref{BR2} that constitutes an orthogonal and complete system in $r\in[0,r_s]$ when $v+\zeta>0$ \cite[Ch.2]{Higgins}. This completes the proof.

\newpage

\begin{figure}
\center
\includegraphics[width=15 cm,height=9 cm]{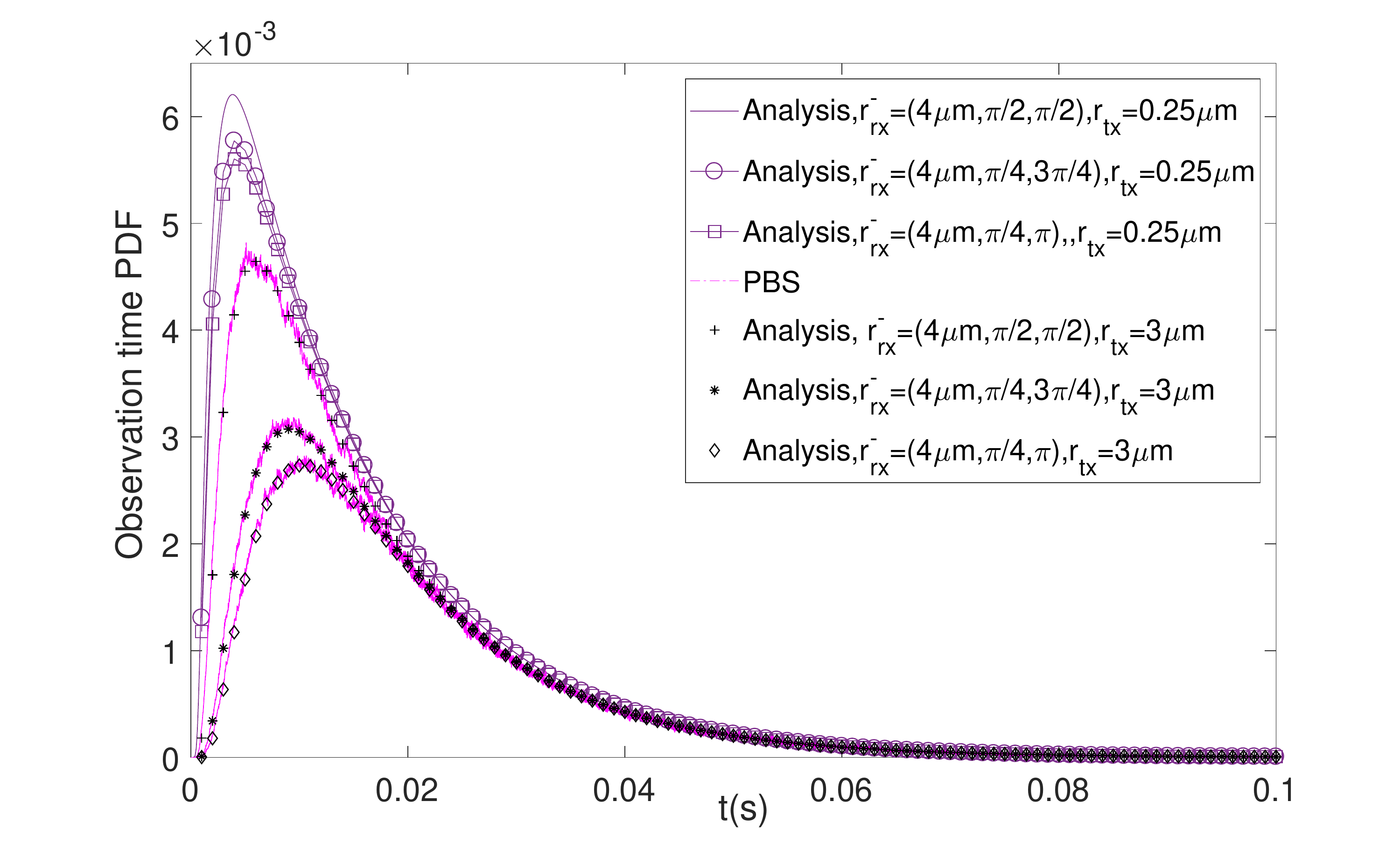}
	\setlength{\abovecaptionskip}{-0.5 cm}
 \caption{Observation time PDF obtained from analysis and PBS for different locations of observation point.}
\label{Fig1}
\end{figure}
\begin{figure}
\center
\includegraphics[width=15 cm,height=9 cm]{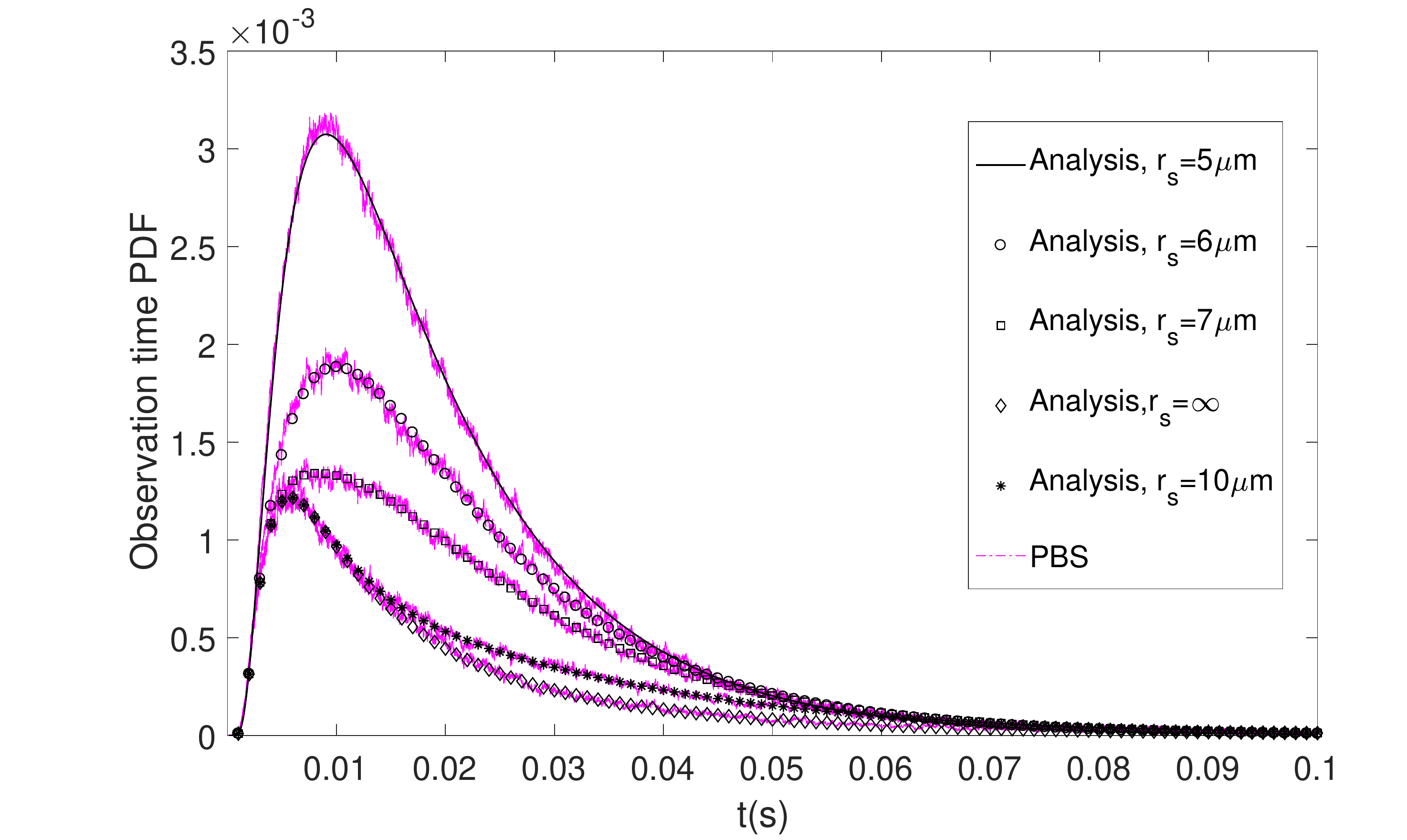}
	\setlength{\abovecaptionskip}{-0.5 cm}
 \caption{Observation time PDF obtained from analysis and PBS for different sphere radius values, $r_s=\{5,6,7,10,\infty\}$ $\si{\mu m}$.}
\label{Fig2}
\end{figure}

\begin{figure}
\center
\includegraphics[width=15 cm,height=9 cm]{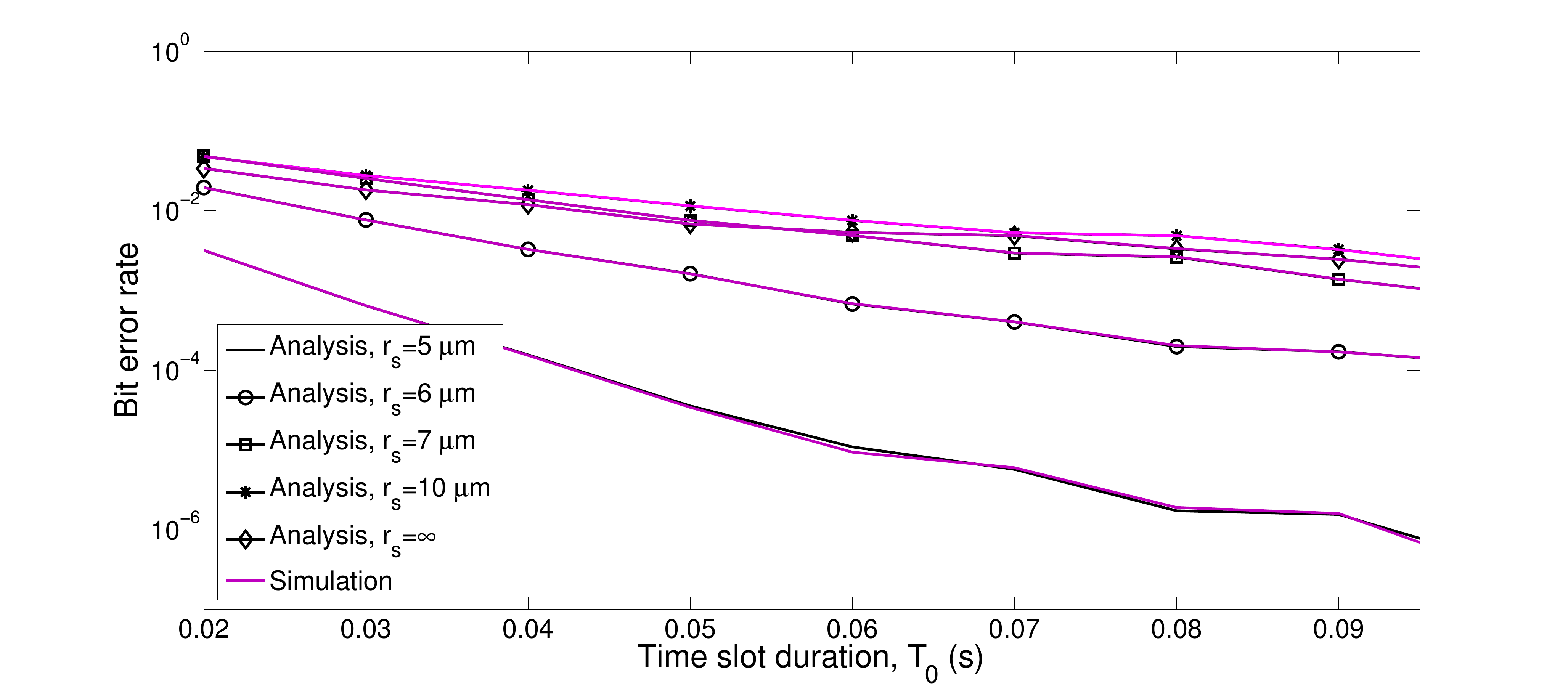}
	\setlength{\abovecaptionskip}{-0.5 cm}
 \caption{BER of the DMC system corresponding with scenarios in Fig. \ref{Fig2}. }
\label{Fig7}
\end{figure}

\begin{figure}
\center
\includegraphics[width=15 cm,height=9 cm]{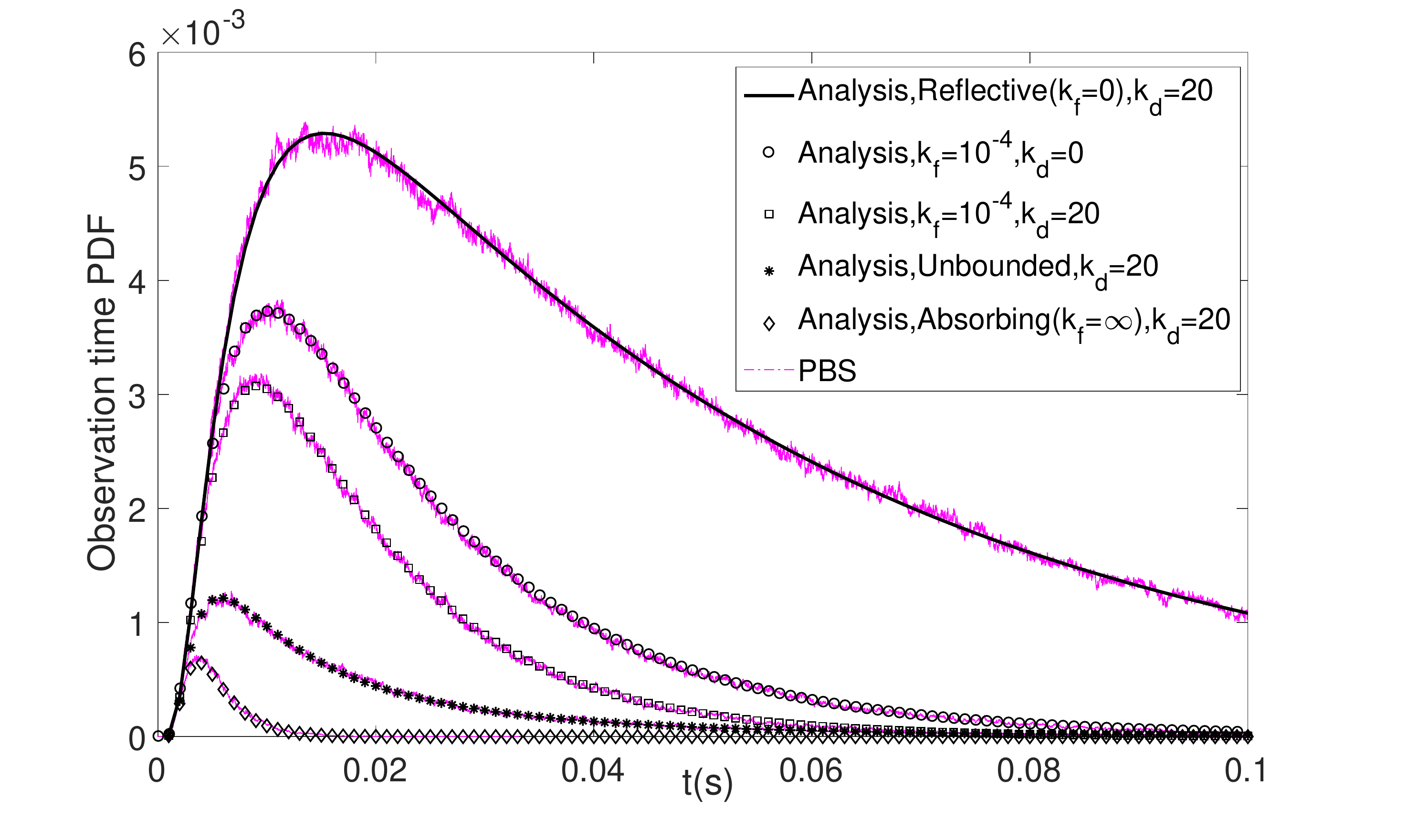}
	\setlength{\abovecaptionskip}{-0.5 cm}
 \caption{Observation time PDF obtained from analysis and PBS for diffusion in unbounded and spherical environment for different $k_{\mathrm d}$ and $k_{\mathrm f}$ values.}
\label{Fig4}
\end{figure}
\begin{figure}
\center
\includegraphics[width=15 cm,height=9 cm]{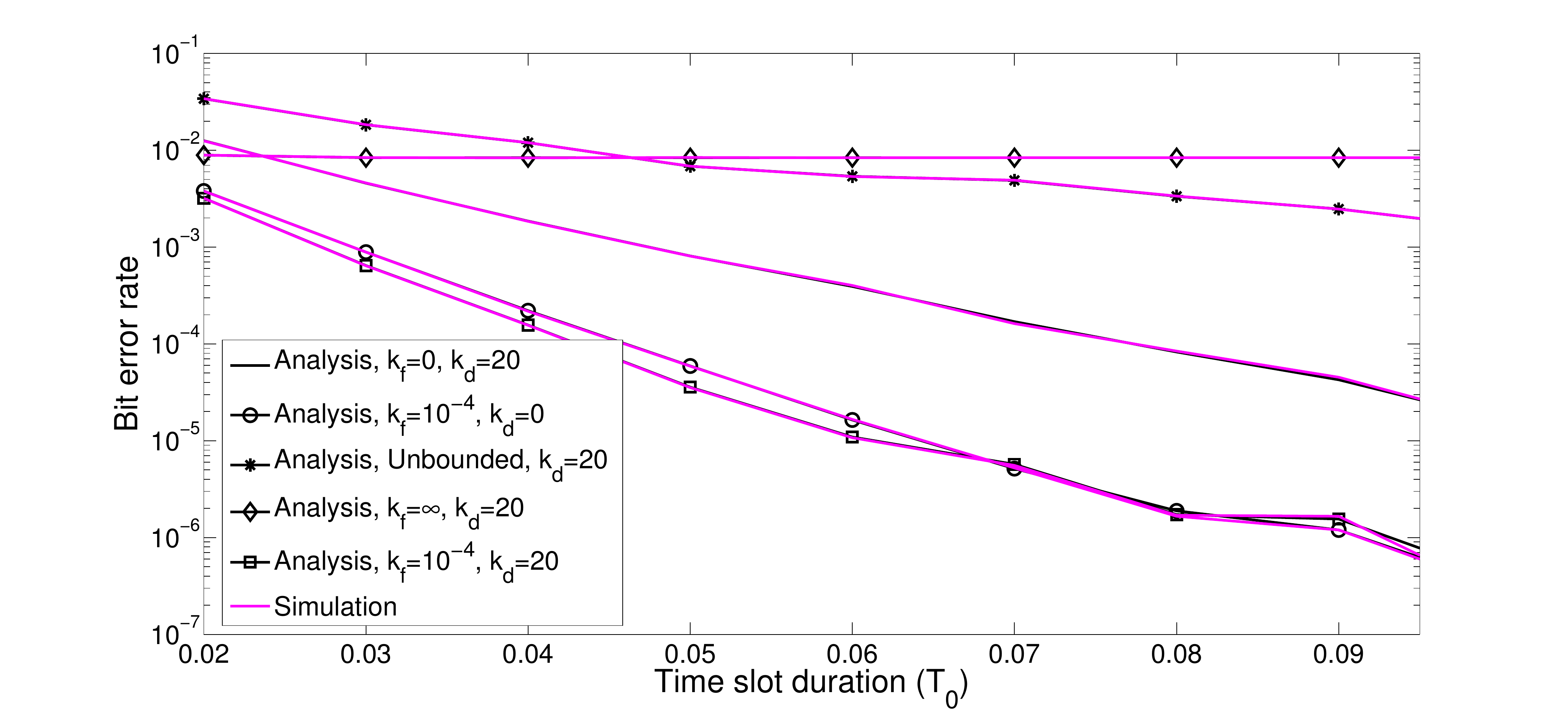}
	\setlength{\abovecaptionskip}{-0.5 cm}
 \caption{BER of the DMC system corresponding with scenarios in Fig. \ref{Fig4}. }
\label{Fig5}
\end{figure}

\end{document}